\newtheorem{fact}{Fact}
\newtheorem{lemma}{Lemma}
\newtheorem{propo}{Proposition}
\def\sTP{{\small \rm TP}}
\def\sBP{{\small \rm BP}}
\def\sMAP{{\small \rm MAP}}
\def\prob{\mathbb P}
\def\ball{{\sf B}}
\def\|{\big|\big|}
\def\sTV{{\tiny \rm TV}}
\def\Tree{{\sf T}}
\def\SAW{{\sf SAW}}
\def\Code{{\mathfrak C}}
\def\ux{\underline{x}}
\def\u0t{{\tt \underline{0}}}
\def\0t{{\tt 0}}
\def\1t{{\tt 1}}
\def\H{{\mathbb H}}
\def\cY{{\cal Y}}
\def\uy{\underline{y}}
\def\cG{{\cal G}}
\def\cV{{\cal V}}
\def\cE{{\cal E}}
\def\ind{{\mathbb I}}
\def\cX{{\cal X}}
\def\reals{{\mathbb R}}
\def\da{{\partial a}}
\def\di{{\partial i}}
\def\uX{\underline{X}}
\def\uY{\underline{Y}}
\def\xh{\hat{x}}
\def\un{\underline{n}}
\def\D{{\sf D}}
\def\tomega{\widehat{\omega}}
\def\cH{{\cal H}}
\def\Rh{\widehat{R}}
\def\SCHEME{{\sf SAW}}
\def\MAPt{{\sf MAP}}
\def\BPt{{\sf BP}}
\def\tl{{\tt l}}
\def\tr{{\tt r}}
\newcommand{\defas}{{\ensuremath{\overset{\text{\tiny def}}{=}}}}
\newcommand{\n}{{\ensuremath{n}}} 
\newcommand{\MAP}{{\ensuremath{\text{\tiny MAP}}}}
\newcommand{\BP}{{\ensuremath{\text{\tiny BP}}}}
\newcommand{\TP}{{\ensuremath{\text{\tiny TP}}}}
\newcommand{\e}{{\ensuremath{\epsilon}}} 
\title{\LARGE \bf TP Decoding}
\author{Yi Lu, Cyril M{\'e}asson and Andrea Montanari
\thanks{Yi Lu is with the Department of Electrical Engineering,
Stanford University, {\tt\small yi.lu@stanford.edu}. Cyril M{\'e}asson is with the Math Center, Bell Labs, Alcatel-Lucent, Murray Hill, {\tt\small cyril.measson@gmail.com}.
Andrea Montanari is with Departments of Electrical Engineering
and Statistics, Stanford University,
{\tt\small montanari@stanford.edu}.}}
\begin{document}

\maketitle
\thispagestyle{empty}
\pagestyle{empty}
\begin{abstract}
`Tree pruning' (TP) is an algorithm for probabilistic inference on
binary Markov random fields. It has been recently derived by Dror
Weitz and used to construct the first fully polynomial
approximation scheme for counting independent sets up to the `tree
uniqueness threshold.' It can be regarded as a clever method for
pruning the belief propagation computation tree, in such a way to
exactly account for the effect of loops.

In this paper we generalize the original algorithm to make it
suitable for decoding linear codes, and discuss various schemes
for pruning the computation tree. Further, we present
the outcomes of numerical simulations on several linear codes,
showing that tree pruning allows to interpolate continuously
between belief propagation and maximum a posteriori decoding.
Finally, we discuss  theoretical implications of the
new method.
\end{abstract}
%
%
\section{Introduction}

Statistical inference is the task of computing marginals
(or expectation values) of complex multi-variate
distributions. \emph{Belief propagation} (BP) is a generic method
for accomplishing this task quickly but approximately,
when the multivariate distribution factorizes according to
a sparse graphical structure.
The advent of sparse graph codes and iterative BP
 decoding \cite{MCT} has naturally made decoding become an
important case of this general problem.
The present paper builds on this connection by `importing' an
algorithm that has been recently developed in the context of
approximate counting and inference \cite{Weitz}.

We will refer to the new algorithm as \emph{tree pruning} (TP)
\emph{decoding}. For a number of reasons the application of this
method to decoding is non-trivial. However, it is an interesting
approach for the three following reasons. $(i)$ It provides a sequence of
decoding schemes that interpolates continuously between BP and the
optimal maximum a posteriori (MAP) decoding. $(ii)$ At each level
of this sequence, the effect of loops of increasing length is
taken into account. $(iii)$ We expect that an appropriate
truncation of this sequence might yield a polynomial algorithm for
MAP decoding on general graphs of bounded degree, for low enough
noise levels. Preliminary numerical results are encouraging.

\subsection{Qualitative Features and Relation to BP Decoding}

As for BP decoding, TP decoding aims at estimating
the a posteriori marginal probabilities of the codeword bits.
Unhappily, the relation between BP estimates and the actual marginals
is in general poorly understood. In the case of random
Low-Density Parity-Check (LDPC) codes and communication over
memoryless channels, density evolution allows to show that,
at small enough noise level, the BP bit error probability
becomes arbitrarily small if the blocklength is
large enough. This implies that the distance between BP estimates
and the actual marginals vanishes as well.
This result is not completely satisfactory, in that
it relies in a crucial way on the locally tree-like
structure of sparse random graphs.
This property does not hold for structured graphs, and, even
for large graphs, it kicks in only at very large blocklengths.

In contrast to this, the algorithm considered in this paper
accounts systematically for short loops. It should therefore
produce better performances, in particular in the error floor
regime since this is dominated by small error events
\cite{Richardson}.

A convenient way of understanding the difference between BP and
MAP decoding makes use of the so-called computation tree. Consider
a code described by a factor graph $G = (V,F,E)$ whereby $V$
represents the variable nodes, $F$ the factor nodes, and $E$ the
edges. Let  $i\in V$, then the corresponding computation tree
denoted by $\Tree(i)$ is the tree of non-reversing walks in $G$
that start at $i$. This gives a graph (tree) structure in a
natural way: two nodes are neighbors if one is reached from the
other adding a step.

BP uses the marginal at the root of $\Tree(i)$ as an estimate for
the marginal distribution at $i$ on the original graph $G$. If $G$
contains short loops in the neighborhood of $i$, the computation
tree differs from $G$ in a neighborhood of the root and, as a
consequence, the BP estimate can differ vastly from the actual
marginal.

Weitz \cite{Weitz} made the surprising remark that there exists a simple
way of pruning the computation tree (and fixing some of its variables)
in such a way that the resulting root marginal coincides
with the marginal on $G$. Unhappily the size of the pruned tree, which we call the
\emph{self-avoiding walk tree} and denote by $\SAW(i)$, is exponential
in the size of the original graph $G$. Nevertheless, the tree can be
truncated thus yielding a convergent sequence of
approximations for the marginal at $i$.
The complexity of the resulting algorithm is linear in the
size of the truncated tree. Its efficiency depends on how sensitive
is the root marginal to the truncation depth.
%
%
\subsection{Contributions and Outline}

Applying this approach to decoding linear codes poses several
challenges:
\begin{enumerate}
\item[$(i)$] Weitz's construction is valid only
for Markov random fields (MRFs) with pairwise interactions and binary
variables. The decoding problem does not fit  this framework.
\item[$(ii)$] The original justification for truncating
the self-avoiding walk tree followed the so-called `strong spatial
mixing' or `uniqueness' condition. This amounts to saying that the
conditional marginal at the root given the variables at depth $t$,
depends weakly on the values of the latter. This is (most of the
times) false in decoding. For a `good' code, the value of bit $i$
in a codeword is completely determined by the values of bits
outside a finite neighborhood around $i$.
\item[$(iii)$] Even worse, we found in numerical simulations that
the original truncation procedure performs poorly in decoding.
\end{enumerate}
The self-avoiding walk tree construction has already motivated
several applications and generalizations in the past few months.
Jung and Shah \cite{JungShah} discussed its relation with BP,
and proposed a distributed implementation.
Mossel and Sly \cite{MosselSly}
used it to estimate mixing times of Monte Carlo Markov
chain algorithms. Finally, and most relevant to the problems listed above,
Nair and Tetali \cite{NairTetali} proposed a generalization to
non-binary variables and multi-variable interactions.
While this generalizations does in principle apply to
decoding, its complexity grows polynomially in the tree size.
This makes it somewhat unpractical in the present context.

In this paper we report progress on the three points above.
Specifically, in Section \ref{sec:DecodingTree} we use
duality to rephrase decoding in terms of a
generalized binary Markov random field. We then show how to generalize
the self-avoiding walk tree construction to this context.
In Section \ref{sec:Truncation} we discuss the problems
arising from the original truncation procedure, and describe
two procedure that show better performances.
Numerical simulations are presented in Section \ref{sec:Simulations}.
 Finally, one of the most interesting perspectives
is to use TP as a tool for analyzing BP and, in particular, comparing it
with MAP decoding. Some preliminary results in this direction
are discussed in Section \ref{sec:Theoretical}.

We should stress that a good part of our simulations concerns the
binary erasure channel (BEC). From a practical point of view, TP decoding is
not an appealing  algorithm in this case. In fact, MAP decoding
can be implemented in polynomial time through, for instance,
Gaussian elimination. The erasure channel is nevertheless a good
starting point for several reasons. $(i)$ Comparison with MAP
decoding is accessible. $(ii)$ We can find a particularly simple
truncation scheme in the erasure case. $(iii)$ Some subtle numerical
issues that exist for general channels disappear for the BEC.

\section{Decoding through the\\
 Self-Avoiding Walk Tree}
\label{sec:DecodingTree}

Throughout this paper we consider binary linear codes of blocklength
$n$ used over a binary-input memoryless channel.  Let
BM$(\epsilon)$, where $\epsilon$ is a noise parameter, denote a generic channel.
 Assume that  $\cY$ is the output alphabet and  let $\{Q(y|x): x\in\{\0t,\1t\}, y\in\cY\}$ denote
  its transition probability.

With a slight
abuse of terminology we shall 
 identify a code
with a particular parity-check matrix $\H$ that represents it,
\begin{eqnarray*}
\Code = \{\ux\in\{\0t,\1t\}^n:\H\ux = \u0t \mod 2\}\, .
\end{eqnarray*}

Therefore, the code is further identified with a Tanner graph $G =
(V,F,E)$ whose adjacency matrix is the parity-check matrix $\H$.
We will denote by $\da \defas \{i\in V:\, (i,a)\in E\}$ the
neighborhood of function (check) node $a$, and write $\da =
(i_1(a),\dots,i_{k(a)}(a))$. Analogously, $\di\defas \{i\in V:\,
(i,a)\in E\}$ indicates the neighborhood of the variable node $i$.
The conditional distribution for the channel output $\uy$ given
the input $\ux$ factorizes according to the graph $G$ (also called
factor graph). It follows immediately from Bayes rule that
$\prob\{\uX = \ux|\uY=\uy\}= \mu^y(\ux)$, where
\begin{eqnarray}
\mu^y(\ux) = \frac{1}{Z(\uy)}\prod_{i\in V} Q(y_i|x_i)
\prod_{a\in F}\ind({\scriptstyle\sum_{j=1}^{k(a)}x_{i_j(a)}=\0t \mod 2}).
\nonumber\hspace{-1.cm}\\
\label{eq:APDistr}
\end{eqnarray}
%
We denote by $\mu^y_i(x_i) = \prob\{X_i=x_i|\uY=\uy\}$ the marginal
distribution at bit $i$. \emph{Symbol MAP decoding} amounts to the following
prescription,
\begin{eqnarray*}
\xh_i^{\sMAP}(\uy) = \arg\max_{x_i\in \{ \0t,\1t\}} \mu^y_i(x_i)\, .
\end{eqnarray*}
Both BP and TP decoders have the same structure, whereby
the marginal  $\mu^y_i(\,\cdot\,)$ is replaced by its approximation,
respectively $\nu^{\sBP}_i(\,\cdot\,)$ or $\nu^{\sTP}_i(\,\cdot\,)$.
%
%
\subsection{Duality and Generalized
Markov Random Field}

We call a \emph{generalized Markov Random Field} (gMRF) over the
finite alphabet $\cX$  a couple $(\cG,\psi)$, where $\cG =
(\cV,\cE)$ is an ordinary graph over vertex set $\cV$, and edge
set $\cE$. Further $\psi = \{\psi_{ij}:\, (i,j)\in \cE;\;\;
\psi_i:\, i\in\cV\}$ is a set of weights indexed by edges and
vertices in $\cG$, $\psi_{ij}:\cX\times\cX\to\reals$,
$\psi_i:\cX\to \reals$. Notice that, unlike for ordinary MRFs, the
edge weights in generalized MRFs are \underline{not} required to be
non-negative.

Given a subset $A\subseteq\cV$, the \emph{marginal} of the gMRF
$(\cG,\omega)$ on $A$, is defined as the function
$\omega_A:\cX^A\to \reals$, with entries
\begin{eqnarray}
\omega_A(\ux_A) = \sum_{\{x_j:j\not \in A\}}
\prod_{(l,k)\in\cE}\psi_{lk}(x_l,x_k)
\prod_{l\in\cV}\psi_{l}(x_l)\, .\label{eq:MargDef}
\end{eqnarray}
When $A=\cV$, we shall omit the subscript and call $\omega(\ux)$
the \emph{weight} of configuration $\ux$.
More  generally, the \emph{expectation} of a function
$f:\cX^{\cV}\to\reals$ can be defined as
\begin{eqnarray}
\omega(f) & = & \sum_{\ux} f(\ux) \prod_{(l,k)\in\cE}\psi_{lk}(x_l,x_k)
\prod_{l\in\cV}\psi_{l}(x_l)\, .
\end{eqnarray}
Notice that $\omega(\,\cdot\,)$ is not (and in general cannot be)
normalized. 
In the sequel, whenever the relevant MRF has
non-negative weights and is normalizable, we shall use words
`expectation' and `marginal' in the usual (normalized) sense.

Duality can be used to reformulate decoding (in particular, the
posterior marginals $\mu^y_i(x_i)$) in terms of a gMRF.
More precisely, given a code with Tanner graph $G = (V, F, E)$,
we define a gMRF on graph $\cG = (\cV,\cE)$ where
$\cV = (V,F)$ and $\cE = E$, proceeding as follows.
We let  $\cX=\{\0t,\1t\}$ and associate variables $x_i\in\{\0t,\1t\}$
to $i\in V$ and $n_a\in\{\0t,\1t\}$ to $a\in F$. We then introduce the weights
\begin{align}
&\forall i\in V, ~\psi_i(x_i) = Q(y_i|x_i), ~~~ \forall a\in F, ~\psi_a(n_a) = 1, \label{eq:gMRFdec1}\\
&\forall (i,a)\in E, ~\psi_{ai}(n_a,x_i) = (-1)^{n_ax_i}.   \label{eq:gMRFdec2}
\end{align}
Although next statement follows from general duality theory,
it is convenient to spell it out explicitly.
\begin{lemma}
The marginals of the a posteriori distribution defined in Eq.~(\ref{eq:APDistr})
are proportional to the ones of the gMRF defined in
Eq.~(\ref{eq:gMRFdec1}) and Eq.~(\ref{eq:gMRFdec2}).
More 
precisely,  we get $\mu_i^y(x_i) =\omega_i(x_i)/[\omega_i(\0t)+\omega_i(\1t)]$.
\end{lemma}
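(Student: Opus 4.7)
The plan is to perform the standard Hadamard/Fourier expansion of the parity-check indicators, recognize the resulting sum as the weight of the gMRF, and then show that the required ratio cancels all the extraneous normalization factors.

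First I would rewrite each hard parity constraint in Eq.~(\ref{eq:APDistr}) by introducing a dual binary variable $n_a\in\{\0t,\1t\}$ at every check $a\in F$. The key identity is
\begin{eqnarray*}
\ind\!\Big({\scriptstyle\sum_{j=1}^{k(a)} x_{i_j(a)}\equiv \0t \bmod 2}\Big)
= \frac{1}{2}\sum_{n_a\in\{\0t,\1t\}}\prod_{j=1}^{k(a)}(-1)^{n_a x_{i_j(a)}},
\end{eqnarray*}
which is just the fact that $\sum_{n_a}(-1)^{n_a s}$ equals $2$ when $s$ is even and $0$ otherwise. Substituting this into Eq.~(\ref{eq:APDistr}) turns $\mu^y(\ux)$ into a sum over the auxiliary configuration $\un=(n_a)_{a\in F}$ of the product $\prod_{i\in V}Q(y_i|x_i)\prod_{(i,a)\in E}(-1)^{n_a x_i}$, up to an overall constant $2^{-|F|}/Z(\uy)$.

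Next I would match the resulting expression against the gMRF weights in Eqs.~(\ref{eq:gMRFdec1})--(\ref{eq:gMRFdec2}). With $\psi_i(x_i)=Q(y_i|x_i)$, $\psi_a(n_a)=1$, and $\psi_{ai}(n_a,x_i)=(-1)^{n_ax_i}$, the total weight $\omega(\ux,\un)$ of a joint configuration on $\cV=V\cup F$ equals precisely the integrand above. Marginalizing as in Eq.~(\ref{eq:MargDef}) with $A=\{i\}$ sums over all $x_j$ ($j\neq i$) \emph{and} all $n_a$ (since $F\subset\cV$), so that
\begin{eqnarray*}
\omega_i(x_i)=2^{|F|}\,Z(\uy)\,\mu^y_i(x_i).
\end{eqnarray*}

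Finally, the constant $2^{|F|}Z(\uy)$ is independent of $x_i$, so it cancels in the ratio $\omega_i(x_i)/[\omega_i(\0t)+\omega_i(\1t)]$, and since $\mu^y_i(\0t)+\mu^y_i(\1t)=1$ this ratio equals $\mu^y_i(x_i)$ as claimed. There is no real obstacle here: the only subtlety is bookkeeping, i.e.\ making sure that the dual variables $n_a$ are indeed summed out by the gMRF marginalization (which they are, because $\cV=V\cup F$ and we fix only the single vertex $i\in V$), and that the overall sign/normalization factors cancel consistently. The lemma is essentially the statement that the Fourier expansion of parity checks is exactly the duality transform that produces the edge weights $(-1)^{n_a x_i}$.
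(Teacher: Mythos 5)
Your proposal is correct and follows essentially the same route as the paper: both hinge on the identity $\sum_{n_a}(-1)^{n_a s}=2\,\ind(s\equiv \0t \bmod 2)$, which (read in either direction) shows that summing out the dual check variables $\un$ from the gMRF weight reproduces $Z(\uy)\,2^{|F|}\,\mu^y(\ux)$, after which the constant cancels in the normalized ratio. Your bookkeeping of the factor $2^{|F|}Z(\uy)$ and of the marginalization over $F\subset\cV$ matches the paper's argument, just presented starting from the posterior rather than from the gMRF.
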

\begin{proof}
It is immediate to prove a stronger result, namely that
the distribution  $\mu^y(\ux)$ is proportional to $\sum_{\un}\omega(\ux,\un)$,
where $\omega(\ux,\un)$ is defined using Eq.~(\ref{eq:gMRFdec1}) and
Eq.~(\ref{eq:gMRFdec2}). We have
\begin{align*}
\sum_{\un}\omega(\ux,\un) & = \sum_{\un}\prod_{i\in V}Q(y_i|x_i)
\prod_{(i,a)\in E}(-1)^{x_in_a}\\
&=\prod_{i\in V}Q(y_i|x_i)\prod_{a\in F}\sum_{n_a\in\{\0t,\1t\}}
(-1)^{n_a\sum_{i\in\da}x_i} \\
&= \prod_{i\in V}Q(y_i|x_i)\prod_{a\in F}
2\,\ind({\scriptstyle \sum_{i\in\da}x_i=\0t \mod 2})\, ,
%
\end{align*}
which is proportional to the right-hand side of
Eq.~(\ref{eq:APDistr}).
\end{proof}
This result, which 
 derives from \cite{HartmannRudolph} and \cite{Battailetal},
motivates us to extend Weitz's construction
to gMRFs.\footnote{More explicitely, we can think of implementing  a binary {\emph{Fourier}} involution as proposed first in  \cite{HartmannRudolph} and \cite{Battailetal} on the graph edges (as later reported  in Eq.~(\ref{eq:RatioEdges})), while processing all graph vertices in a similar way.} This is the object of the next section.
%
%
\subsection{The Self-Avoiding Walk Tree for Generalized
Markov Random Field}

Assume we are given a graph $\cG = (\cV,\cE)$ and a node
$i\in\cV$. We have already described the computation tree  rooted at $i$, which we denote by  $\Tree(i)$.

An `extended self-avoiding walk' (SAW) on a graph $\cG = (\cV,\cE)$,
starting at $i\in\cV$ is a non-reversing walk that
never visits twice the same vertex, except, possibly, for its end-point.
The `self-avoiding walk tree' rooted at $i\in \cV$ is the tree
of all extended self-avoiding walks on $\cG$ starting at $i$. 
It is straightforward to see that $\SAW(i)$ is in fact a finite
sub-tree of $\Tree(i)$. Its size is bounded by $(\Delta-1)^{|\cV|}$,
where $\Delta$ is the maximum degree of $\cG$, and $|\cV|$ the node number.

\begin{figure}[t]
\centering
\setlength{\unitlength}{0.75bp}%
\begin{picture}(200,100)
\put(0,0){\includegraphics[scale=0.75]{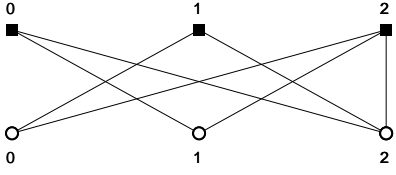}}
\end{picture}
\caption{\small Tanner graph for a repetition code of length $3$.}
\label{fig:Tanner}
\end{figure} %

\begin{figure}[t]
\centering
\setlength{\unitlength}{0.75bp}%
\begin{picture}(210,250)
\put(0,0){\includegraphics[scale=0.75]{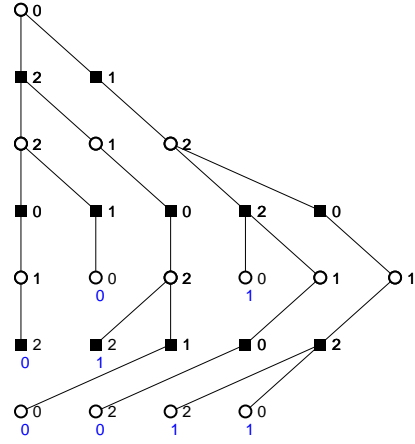}}
\end{picture}
\caption{\small Self-avoiding walk tree  $\SAW(i)$ for the
Tanner graph of Figure \ref{fig:Tanner}, rooted at variable node $i=0$.
In this picture, each node of the self-avoiding tree is labeled by its projection onto $\cV$.
At `terminated' nodes we marked the value that the variable is forced to take.
}
\label{fig:SAWTree}
\end{figure}
As an example,
Figure \ref{fig:SAWTree} shows a SAW tree for the small graph $\cG$
depicted in Figure \ref{fig:Tanner}. (In this case, $\cG$ is the Tanner graph
of a repetition code of length $3$.)
If we denote by $\cV(i)$ the vertex set of $\SAW(i)$, there exists a
natural projection $\pi:\cV(i)\to\cV$ that preserves edges.
Formally, $\pi$ maps a self-avoiding walk to its end-point.

Notice that $\SAW(i)$ has two types of leaf nodes: $(i)$ Nodes
that are leaves in the original graph $\cG$. $(ii)$ Nodes that are
not leaves in $\cG$ but corresponds to extended self-avoiding
walks that cannot be further continued. The latter case arises
when the endpoint of the self-avoiding walk has already been
visited (i.e., when a loop is closed). We shall refer to nodes of
the second type as \emph{terminated} nodes. Indeed, the
self-avoiding walk tree $\SAW(i)$ can be obtained from $\Tree(i)$
by the following \emph{termination} procedure. Imagine descending
$\Tree(i)$ along one of its branches. When the same projection is
encountered for the second time, terminate the branch. Formally,
this means eliminating all the descendants of $u$ whenever $\pi(u)
= \pi(v)$ for some  ancestor $v$ of $u$.

Given a gMRF $(\cG,\psi)$, we can define a gMRF on $\SAW(i)$ in
the usual way. Namely, to any edge $(u,v)\in \SAW(i)$, we
associate a weight coinciding with the one of the corresponding
edge in $\cG$: $\psi_{u,v}(x_u,x_v) =
\psi_{\pi(u),\pi(v)}(x_u,x_v)$. The analogous definition is
repeated for any non-terminated node: $\psi_{u}(x_u) =
\psi_{\pi(u)}(x_u)$. Finally, the choice of weight on terminated
nodes makes use of the hypothesis that $\cX = \{\0t,\1t\}$. Assume
that the edges of $\cG$ are labeled using a given order, e.g., a
lexicographic order. Let $u$ be a terminated node with $\pi(u)=j$.
Then the self-avoiding walk corresponding to $u$ contains a loop
that starts and ends at $j$. We let $\psi_u(x_u) = \ind(x_u=\0t)$
(respectively, $\psi_u(x_u) = \ind(x_u=\1t)$) if this loop quits
$j$ along an edge of higher (respectively, lower) order than the
one along which it enters $j$. The relevance of this construction
is due to Weitz who
considered\footnote{
Weitz \cite{Weitz} considered the \emph{independent set problem}
but remarked that his construction generalized to a larger class
of MRFs. Jung and Shah \cite{JungShah} studied this
generalization. Nair and Tetali \cite{NairTetali} discussed the
case of `hard-core' interactions (positively alignable) as well.}
the case of \emph{permissive} binary MRFs. By this we mean that
$\psi_{kl}(x_k,x_l)\ge 0$, $\psi_k(x_k)\ge 0$, and, for any
$k\in\cV$, there exists $x_k^*\in\{\0t,\1t\}$, such that
$\psi_k(x_k^*)>0,\psi_{kl}(x_k^*,x_l)>0$ for any $l$ with
$(k,l)\in\cE$ and $x_l\in\{\0t,\1t\}$. (The latter is referred to
as the `permissivity' condition.)
\begin{propo}[Weitz]\label{propo:Weitz}
Given a \underline{permissive} binary MRF $(\cG,\psi)$, 
 the marginal of $x_i$ with respect to
$(\cG,\psi)$ is proportional to the root marginal on $\SAW(i)$.
\end{propo}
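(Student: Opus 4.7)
The plan is to follow Weitz's original strategy: reduce the claim to an identity about \emph{ratios} of marginals, and then unroll that identity along a depth-first exploration of $\cG$ starting from $i$. Since both sides of the asserted proportionality live in a two-dimensional space, it suffices to prove that $R_i(\cG,\psi)\defas \omega_i(\1t)/\omega_i(\0t)$ equals the corresponding ratio at the root of $\SAW(i)$ under the induced weights. The permissivity assumption is used here precisely to ensure that the denominators stay strictly positive, so that the recursion below is well-defined.

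The main tool is a \emph{vertex-splitting lemma}. Let $i$ have neighbors $j_1,\dots,j_d$ in $\cG$, listed in the fixed edge order. One constructs a modified graph $\widehat\cG$ in which $i$ is replaced by $d$ copies $i_1,\dots,i_d$, each attached to a single $j_k$, and the copies carry single-vertex weights chosen so that summing over their binary values reproduces the joint influence of the original $i$ on $(j_1,\dots,j_d)$. The lemma then states that $R_i(\cG,\psi)$ can be written as an explicit function of the $d$ ratios $R_{i_k}(\widehat\cG,\widehat\psi)$. The identity is a direct calculation: one expresses $\omega_i(x_i)$ by first summing over $(x_{j_1},\dots,x_{j_d})$, redistributes the edge weights $\psi_{i,j_k}(x_i,x_{j_k})$ into single-vertex weights attached to the copies $i_k$, and checks that the two sides agree term by term.

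With this lemma in hand, the proof proceeds by induction on a suitable complexity measure of $\cG$: apply the splitting at the root $i$, then recursively at each $j_k$ inside the component of $\widehat\cG$ containing $i_k$, always respecting the fixed edge ordering. The recursion terminates in exactly the two ways that produce leaves of $\SAW(i)$: either the current vertex is already a leaf of $\cG$ and no further splitting is required, or the depth-first path has returned to a vertex $j$ that was visited higher up. In the latter case, after the unrolling only the pairwise weight on the back-edge survives, and since variables are binary this weight pins the descendant copy of $j$ to a single value in $\{\0t,\1t\}$.

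The delicate step, and the one I expect to dominate the bookkeeping, is verifying that the pinning rule produced by the unrolling coincides with the prescribed one: $\psi_u(x_u)=\ind(x_u=\0t)$ when the loop exits $j$ along a higher-ordered edge than the one on which it entered, and $\ind(x_u=\1t)$ otherwise. This requires tracking carefully which copy of $j$ is created first when $j$ is split, and how the edge orientations propagate through successive applications of the vertex-splitting lemma. Once this matching is established, iterating the lemma produces a product formula whose combinatorial skeleton is precisely $\SAW(i)$ equipped with the stated terminated-node weights, and the proposition follows.
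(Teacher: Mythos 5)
Your argument does not go through as written, because the \emph{vertex-splitting lemma} on which everything rests is mis-stated and, as stated, false. You cannot attach single-vertex weights to independent copies $i_1,\dots,i_d$ so that ``summing over their binary values reproduces the joint influence of the original $i$'': that influence, $\sum_{x_i}\psi_i(x_i)\prod_k\psi_{i j_k}(x_i,x_{j_k})$, is in general a non-factorizing function of $(x_{j_1},\dots,x_{j_d})$, whereas independent summation over the copies forces a product form. The identity actually used by Weitz (and in this paper, Eq.~(\ref{eq:Splitting})) is different: the marginal of $i$ equals the \emph{joint} marginal of the $d$ split copies evaluated at the all-equal configuration, with no independent summation. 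The ratio $R_i$ is then obtained not from ratios in a single split graph $\widehat\cG$, but as a telescoping product over $d$ \emph{distinct} pinned copies $\cH_1,\dots,\cH_d$ of the split graph: in $\cH_j$ the split nodes preceding $u^{(j)}$ in the fixed edge order are forced to $\0t$ and those following it to $\1t$, and the product telescopes because $\omega^{(j)}_{u^{(j)}}(\1t)=\omega^{(j+1)}_{u^{(j+1)}}(\0t)$. These imposed pinnings are precisely what become the terminated-node weights of $\SAW(i)$ when the construction is iterated; without them your recursion merely unrolls the computation tree $\Tree(i)$ and yields the BP marginal, not the SAW-tree marginal, which is the whole content of the proposition.

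Consequently, the mechanism you invoke for the terminated-node rule --- ``only the pairwise weight on the back-edge survives, and since variables are binary this weight pins the descendant copy of $j$'' --- is wrong: for a generic permissive MRF (e.g.\ strictly positive Ising-type weights) no pairwise weight pins anything. The $\0t$/$\1t$ value at a terminated node is not derived from a surviving weight; it is a boundary condition installed by fiat at the moment $j$ is split, and its dependence on whether the loop leaves $j$ along a higher- or lower-ordered edge than the one it entered on is immediate from the pinning convention, not a bookkeeping fact to be recovered afterwards. This is exactly the step you label ``delicate'' and defer, so the proof currently has no content where the theorem's characteristic feature is produced. For comparison, the paper does not reprove Weitz's statement directly: it proves the more general Proposition~\ref{propo:gMRF} for gMRFs by the splitting-plus-telescoping argument sketched above, and then obtains Proposition~\ref{propo:Weitz} as the permissive special case by checking, via the ratio recursions (\ref{eq:RatioEdges}) and (\ref{eq:RatioNodes}), that the generalized root marginal coincides with the ordinary root marginal on $\SAW(i)$; your use of permissivity (to keep denominators strictly positive) matches that last step, but the core splitting identity and the origin of the pinned boundary values need to be repaired before the rest of your outline can be carried out.
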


The problem with non-permissive MRFs and, \emph{a fortiori},
with generalized MRFs, is that the tree model $\SAW(i)$
may not admit any assignment of the variables such that  all the weights
$\psi_l(x_l)$, $\psi_{kl}(x_k,x_l)$ are non-negative. 
As a consequence the MRF on $\SAW(i)$ does not define a probability
distribution and this invalidates the derivation in \cite{Weitz} or \cite{JungShah}.
Even worse, the procedure used in these papers was based in keeping
track of ratios among marginals, of the form $R_i = \mu_i(x_i=\0t)/
\mu_i(x_i=\1t)$. When the MRF does not define a distribution, ill-defined
ratios such as $0/0$ can appear.

Let us stress that this problem is largely due to the
`termination' procedure described above. This in fact constrains
the set of assignments with non-vanishing weight to be compatible
with the values assigned at terminated nodes.

In order to apply the self-avoiding walk construction to gMRFs,
we need to modify it in the two following ways.

$(i)$ We add further structure to $\SAW(i)$.
For any $u\in \cV(i)$, let $\D(u)$ be the set of its
\emph{children} (i.e.,  the set of extended
self-avoiding walks that are obtained by adding one step to $u$).
Then we partition $\D(u) = \D_1(u)\cup\cdots\cup\D_k(u)$ as follows.
Let $v_1,v_2\in \D(u)$ be two children of $u$,
and write them as $v_1 = (u,j_1)$, $v_2 = (u,j_2)$. Further, let
$j=\pi(u)$.
Then we write $v_1\sim v_2$ if there exists an extended
self-avoiding walk of the form $(u,j_1,u',j_2,j)$. Here
we are regarding $u$, $u'$ as walks on $\cG$ (i.e.,  sequences of
vertices) and we use $(u,v,w,\dots)$ to denote the concatenation of walks.
It is not difficult to verify that $\sim$ is an equivalence relation.
The partition $\{\D_1(u),\dots,\D_k(u)\}$ is defined to be the partition in
equivalence classes under this relation.

$(ii)$ We define the \emph{generalized root marginal} of
$\SAW(i)$ through a recursive procedure that makes it always well-defined.
First notice that, if $\cG$ is a tree rooted at $i$, then
the marginal at $i$ can be computed by a standard message passing
(dynamic programming) procedure, starting from the leaves
and moving up to the root. The update rules are, for $u\in\D(w)$,
\begin{eqnarray}
\omega_{u\to w}(x_u) &=& \psi_u(x_u)\prod_{v\in \D(u)}
\tomega_{v\to u}(x_u)\, ,\label{eq:UpNodes}\\
\tomega_{v\to u}(x_u) &=&\sum_{x_{v}}\psi_{uv}(x_u,x_{v})\;\omega_{v\to u}
(x_{v})\, ,\label{eq:UpEdges}
\end{eqnarray}
where edges are understood to be directed towards the root.
The marginal at the root is obtained by evaluating the right
hand side of Eq.~(\ref{eq:UpNodes}) with $u=i$.

The generalized root marginal is defined by the same
procedure but changing Eq.~(\ref{eq:UpNodes}) as follows.
Given the partition $\D(u) = \D_1(u)\cup\cdots\cup\D_k(u)$
described above, we let
\begin{eqnarray}
\omega_{u\to w}(x_u) &=& \psi_u(x_u)\prod_{l=1}^k
\tomega_{\D_l(u)}(x_u)\, ,\label{eq:UpNodes1}
\end{eqnarray}
where we define $\tomega_{\D_l(u)}(x_u)$ through a
\emph{concatenation} procedure. Let
$(\tomega^{(1)}(\, \cdot\,),\dots,\tomega^{(k)}(\,\cdot\,))$
be the set of messages $\{\tomega_{v\to u}(\,\cdot\,):\;
v\in\D(u)\}$ ordered according to the order of edges $(\pi(v),\pi(u))$
in $\cG$. Then we let
\begin{eqnarray}
(\tomega_{\D_l(u)}(\0t),\tomega_{\D_l(u)}(\1t))\defas
(\tomega^{(1)}(\0t),\tomega^{(k)}(\1t))\, .\label{eq:Concatenation}
\end{eqnarray}
The reason for calling this a `concatenation' follows from the remark that,
with the notations above, we have
$\tomega^{(1)}(\1t) = \tomega^{(2)}(\0t)$,
$\tomega^{(2)}(\1t) = \tomega^{(3)}(\0t)$, etc.
We refer to the discussion (and proof) below for a justification of
this claim. As a consequence, the procedure in Eq.~(\ref{eq:Concatenation})
can be described as follows: write the components of $\tomega^{(1)}
(\,\cdot\,),  \tomega^{(2)}(\,\cdot\,)\, \cdots,\tomega^{(k)}(\,\cdot\,)$
in sequence, and eliminate repeated entries.

With this groundwork, we obtain the following generalization of
Weitz's result.
\begin{propo}\label{propo:gMRF}
Given a gMRF $(\cG,\psi)$, the marginal at $i\in\cV$ with respect to
$(\cG,\psi)$ is equal to the generalized root marginal on $\SAW(i)$.
\end{propo}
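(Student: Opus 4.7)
The plan is to prove the proposition by induction on the cyclomatic complexity of $\cG$, namely the number of edges in $\cE$ outside a fixed spanning tree. The base case is when $\cG$ is itself a tree: then $\SAW(i)$ coincides with $\cG$ rooted at $i$, no termination occurs, every set $\D(u)$ splits into singleton equivalence classes, and the generalized update (\ref{eq:UpNodes1}) collapses into the ordinary dynamic-programming update (\ref{eq:UpNodes}). Its correctness on a tree gMRF is immediate by leaf-to-root induction, since the defining sum (\ref{eq:MargDef}) factorizes naturally over subtrees even when the weights carry signs.

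For the inductive step, I would single out a vertex $j$ lying on a cycle and \emph{split} it into copies $j^{(1)},\dots,j^{(d)}$, one for each incident edge taken in the fixed edge order, thereby obtaining a gMRF $(\cG',\psi')$ of strictly lower cyclomatic complexity. The marginal at $i$ in $(\cG,\psi)$ equals the marginal at $i$ in $(\cG',\psi')$ subject to the tying constraint $x_{j^{(1)}}=\cdots=x_{j^{(d)}}$, which is a product of pairwise indicator factors. Applying the inductive hypothesis to $(\cG',\psi')$ and then re-incorporating the tying factors along $\SAW(i)$ should reproduce exactly the partition $\D(u) = \D_1(u)\cup\cdots\cup\D_k(u)$ at every node $u$ with $\pi(u)=j$: two children of $u$ end up in the same class precisely when their underlying incident edges at $j$ are tied together by a self-avoiding return walk, which is the condition defining $\sim$. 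Classes in different $\D_l(u)$ decouple, yielding the product form of (\ref{eq:UpNodes1}), while the identification within a class is captured by the aggregation (\ref{eq:Concatenation}).

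The decisive combinatorial step is verifying the telescoping identity $\tomega^{(m)}(\1t)=\tomega^{(m+1)}(\0t)$ within each equivalence class, which is what justifies rule (\ref{eq:Concatenation}). Intuitively, consecutive children within a class correspond to successive segments of a fundamental cycle through $\pi(u)$, so after splitting $\pi(u)$ the $m$-th outgoing subtree and the $(m+1)$-th outgoing subtree meet at a common copy of that vertex, and the outgoing message of one at value $\1t$ coincides with the incoming message of the next at the flipped value $\0t$. I expect this identity to be the main obstacle. In Weitz's permissive setting the analogous statement is hidden behind marginal ratios $R=\omega(\0t)/\omega(\1t)$ and the ad hoc termination rule that forces the variable at a terminated node to $\0t$ or $\1t$; for gMRFs with signed and possibly vanishing weights the ratio may be $0/0$ or negative, and termination can produce configurations of zero total weight. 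The argument must therefore be purely algebraic: track two-component vectors $(\omega(\0t),\omega(\1t))$ through the recursion, carefully match the edge order of $\cG$ to the listing order of children within each $\D_l(u)$, and verify the identity at every splitting. Once it is in hand, the concatenation (\ref{eq:Concatenation}) is just the bookkeeping that removes doubly-counted middle entries, and the induction closes.
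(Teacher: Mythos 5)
The main gap is in how your induction closes. Your inductive hypothesis concerns \emph{single-vertex} marginals of gMRFs of smaller cyclomatic number, but after splitting a cycle vertex $j$ into degree-one copies the marginal at $i$ in $(\cG,\psi)$ is the diagonal restriction of a \emph{joint} marginal of $(\cG',\psi')$ at $(i,j^{(1)},\dots,j^{(d)})$. ``Applying the inductive hypothesis to $(\cG',\psi')$ and then re-incorporating the tying factors'' is not a well-defined operation: once $\cG'$ has been marginalized down to $x_i$, the correlations among the copies of $j$ are gone, and if instead you restore the tying factors as pairwise edges you are back to the original cyclomatic number. The variant that does stay within single-site marginals---force all copies of $j$ to a common value and sum over that value---gives a decomposition that does not match $\SAW(i)$: in the self-avoiding walk tree the terminated copies of $j$ carry \emph{mixed} $\0t/\1t$ forcings dictated by the edge order at $j$ relative to the walk from the root, and producing exactly that pattern is the content of the proposition, not something that can be read off afterwards.

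The paper's proof avoids this by always splitting the \emph{current root}: remove $i$ and factor the sum in Eq.~(\ref{eq:MargDef}) over the connected components of $\cG\setminus i$ (this is what yields Eq.~(\ref{eq:UpNodes1}) and the partition $\D_1(u),\dots,\D_k(u)$); inside a component rooted at $u$, replace $u$ by $\deg(u)$ degree-one copies, use $\omega_u(x)=\omega'_{u^{(1)}\cdots u^{(k)}}(x,\dots,x)$ as in Eq.~(\ref{eq:Splitting}), and then take $\deg(u)$ copies of the split graph in which the unused split nodes are forced to $\0t$ or $\1t$ according to the edge order; Eq.~(\ref{eq:UpEdges}) is then just the marginal of a degree-one node. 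With these objects the telescoping identity you single out as the main obstacle is a one-line consequence of the explicit forcing strings (consecutive copies evaluate the same completely forced weight), so it is not where the difficulty lies; your heuristic that the $m$-th and $(m+1)$-th subtrees ``meet at a common copy'' is not the actual mechanism. The real work is in choosing the construction so that recursing on components literally unfolds $\cG$ into $\SAW(i)$, terminated nodes and forcing rule included, while only root marginals (two-component weight vectors, exactly as you propose, no ratios) ever appear. Your base case and your instinct to avoid ratios are right, but the inductive step as formulated does not go through without importing this root-splitting/boundary-forcing construction---at which point you are reproducing the paper's (Weitz's) argument rather than giving an alternative induction.
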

\begin{proof}
The proof is very similar to Weitz's original proof  in \cite{Weitz}; 
 the 
  difference is that
special care must be paid to avoid ill-defined expressions.
The argument consists in progressively simplifying
the graph $\cG$ (rooted at $i$) until $\SAW(i)$ is obtained.
We shall represent these simplifications graphically.

Consider the first step, corresponding to
Eq.~(\ref{eq:UpNodes1}), with $u=i$.
The partition of $\D(u)$ in $\{\D_1(u)$, $\dots \D_k(u)\}$,
corresponds to a partition of of the subgraph $\cG\setminus i$
(obtained by eliminating from $\cG$, $i$ as well as its adjacent vertices)
into connected components. This correspondence is depicted below
(whereby gray blobs correspond to connected sub-graphs).
\begin{figure}[h!]
\includegraphics[scale=0.225]{./fig/graphred}
\put(-195,55){$u$}
\put(-95,55){$u_1$}
\put(-63,55){$u_2$}
\end{figure}
After factoring
out the term $\psi_u(x_u)$, the definition of marginal in
Eq.~(\ref{eq:MargDef}) factorizes naturally on such components,
leading to Eq.~(\ref{eq:UpNodes1})

Consider now one of such components, call it $\cG_1$,
such as the one depicted below.
The corresponding generalized root marginal is computed using
the concatenation rule, specified in Eq.~(\ref{eq:Concatenation}).
\begin{figure}[h!]
\includegraphics[scale=0.225]{./fig/graphred2}
\put(-225,55){$u$}
\put(-165,55){$u^{(1)}$}
\put(-130,55){$u^{(2)}$}
\put(-65,53){$u^{(1)}$}
\put(-42,53){$u^{(2)}$}
\put(-67,0){$\0t$}
\put(-50,0){$\1t$}
\end{figure}
In order to derive this rule, first consider the graph
$\cG'_1$ obtained from
$\cG_1$ by replacing its root $u$ by $k={\rm deg}(u)$ copies
$u^{(1)},\dots,u^{(k)}$,
each of degree $1$
(here ${\rm deg}(v)$ denotes the degree of vertex $v$).
Each of the newly introduced vertices is adjacent to one of
the edges incident on the root in $\cG_1$.
Further $u^{(1)},\dots,u^{(k)}$ are labeled according to
the ordering (chosen at the beginning of the reduction procedure)
on the adjacent edges. These $k$ nodes will be referred to as
`split nodes' in the sequel.

From the definition of marginal in Eq.~(\ref{eq:MargDef}), and  using the notation $\omega'$
for the gMRF on $\cG_1'$, we have  
\begin{eqnarray}
\omega_u(x) =  \omega'_{u^{(1)}\dots u^{(k)}}(\underbrace{x\cdots\cdots x}_{k})\, ,\label{eq:Splitting}
\end{eqnarray}
for $x\in\{\0t,\1t\}$. This identity is represented as the first
equality in the figure above.

Next we replace the graph $\cG_1'$ by $k$ copies of it,
$\cH_1,\dots,\cH_k$. With a slight abuse of notation,
we re-name $u^{(1)}$ the first of the $k$ `split nodes' in $\cH_1$,
$u^{(2)}$ the second in $\cH_2$, and so on. Further we
add node weights to the other `split nodes,' (i.e.,
the ones that remained un-named), either of the form
$\psi_{v}(x_v)=\ind(x_v=\0t)$ (forcing $x_v$ to take value
$\0t$) or of the form $\psi_v(x_v) = \ind(x_v = \1t)$ (forcing $x_v$
to take value $\1t$). More precisely, for any $j\in\{1,\dots,k\}$ on
$\cH_j$ we force to $\0t$ those split nodes that come before $u^{(j)}$,
and to $\1t$ the ones that come after.

As a consequence, if we use $\omega^{(j)}$  for the gMRF $\cH^{(j)}$,
we have
\begin{eqnarray}
\omega^{(j)}_{u^{(j)}}(x) = \omega'_{u^{(1)}\dots u^{(k)}}(
\underbrace{\0t\cdots\0t}_{j-1}x\underbrace{\1t\cdots\1t}_{k-j})\, .
\end{eqnarray}
In particular, for any $j\in\{1,\dots,k-1\}$,
$\omega^{(j)}_{u^{(j)}}(\1t) =\omega^{(j+1)}_{u^{(j+1)}}(\0t)$.
As a consequence of this fact and of Eq.~(\ref{eq:Splitting}), we get
\begin{eqnarray}
(\omega_u(\0t),\omega_u(\1t)) =
(\omega^{(1)}_{u^{(1)}}(\0t),\omega^{(k)}_{u^{(k)}}(\1t))\, .
\end{eqnarray}
This proves Eq.~(\ref{eq:Concatenation}) with 
$\omega^{(1)}_{u^{(1)}}(x) \defas \tomega^{(1)}(x)$ (second equality in the
last figure above).

Finally, Eq.~(\ref{eq:UpEdges}) follows by considering the marginal of a node
of degree $1$, as $u^{(1)},\dots,u^{(k)}$ in graphs $\cH_1$,
$\dots,\cH_k$, and expressing it in terms of the marginal of its
only neighbor.

This completes one full step of the procedure that breaks the
loops through node $i$. By recursively repeating the same steps,
the graph is completely unfolded giving rise to $\SAW(i)$.
\end{proof}

The self-avoiding walk tree $\SAW(i)$ appears as a convenient
way to organize the calculation of the marginal at $i$ in the general case.
In the case of permissive MRFs this calculation
coincides with a standard marginal calculation on the tree $\SAW(i)$.
It is instructive to check this explicitly.
\begin{fact}
Proposition \ref{propo:Weitz} is a special case of Proposition
\ref{propo:gMRF} for permissive MRFs.
\end{fact}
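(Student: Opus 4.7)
The plan is to show that, for permissive binary MRFs, the generalized root marginal on $\SAW(i)$ defined via Eqs.~(\ref{eq:UpNodes1}) and (\ref{eq:Concatenation}) coincides, up to normalization, with the ordinary marginal obtained by standard dynamic programming on the same tree $\SAW(i)$ using Weitz's forcing at the terminated leaves. Since Proposition~\ref{propo:gMRF} already identifies the generalized root marginal with the true marginal at $i$, this equality is exactly the content of Proposition~\ref{propo:Weitz}.

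First I would observe that the tree $\SAW(i)$, its terminated leaves, and the indicator weights assigned at those leaves are defined identically in both constructions, so the only point of discrepancy lies at an interior vertex $u$ whose children $\D(u)$ contain an equivalence class $\D_l(u)$ of size $k\ge 2$. At such a vertex, Weitz's prescription aggregates the child messages by the ordinary product $\prod_{m=1}^{k}\tomega^{(m)}(x)$, whereas the generalized rule outputs the concatenation $(\tomega^{(1)}(\0t),\tomega^{(k)}(\1t))$. The task thus reduces to showing that these two prescriptions produce two-component messages that differ only by a common positive scalar.

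The key tool is the consistency relation $\tomega^{(m)}(\1t)=\tomega^{(m+1)}(\0t)$ for $m=1,\ldots,k-1$, which is precisely the identity established by the splitting-and-forcing argument inside the proof of Proposition~\ref{propo:gMRF}. In the permissive case every $\tomega^{(m)}$ is strictly positive, so one may divide, and telescoping gives
\[
\prod_{m=1}^{k}\frac{\tomega^{(m)}(\0t)}{\tomega^{(m)}(\1t)}=\frac{\tomega^{(1)}(\0t)}{\tomega^{(2)}(\0t)}\cdots\frac{\tomega^{(k-1)}(\0t)}{\tomega^{(k)}(\0t)}\cdot\frac{\tomega^{(k)}(\0t)}{\tomega^{(k)}(\1t)}=\frac{\tomega^{(1)}(\0t)}{\tomega^{(k)}(\1t)}\, ,
\]
so the ratio of the two components of Weitz's product message equals the ratio of the two components produced by the concatenation rule, as needed.

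Finally I would propagate this proportionality up the tree by induction on depth: at terminated leaves both constructions agree verbatim, and at any interior vertex, provided the incoming messages from the children already coincide up to positive scalars, the outgoing messages produced by either aggregation rule still coincide up to a positive scalar. The main obstacle I anticipate is confirming the consistency relation $\tomega^{(m)}(\1t)=\tomega^{(m+1)}(\0t)$ throughout the recursion rather than only at the root, but since the splitting argument of Proposition~\ref{propo:gMRF} applies verbatim at every level this should resolve itself once the recursion is unrolled one step at a time.
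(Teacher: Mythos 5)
Your proposal is correct and follows essentially the same route as the paper's proof: the consistency identity $\tomega^{(m)}(\1t)=\tomega^{(m+1)}(\0t)$ together with the telescoping of likelihood ratios is exactly what the paper uses to show that the concatenation rule of Eq.~(\ref{eq:Concatenation}) agrees, up to a positive factor, with the ordinary product of child messages, after which the recursion on $\SAW(i)$ reduces to the standard tree computation of marginal ratios. One small caveat: permissivity only guarantees positivity of the messages at the permissive value $x^*$ (the paper orients its ratios with that component in the denominator), not strict positivity of every component as you assert, so your telescoping should divide only by the interior terms $\tomega^{(m+1)}(\0t)$ --- or, more simply, exhibit the positive proportionality constant $\prod_{m=2}^{k}\tomega^{(m)}(\0t)$ directly --- but this is a cosmetic fix, not a gap.
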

\begin{proof}
First notice that, for permissive MRFs, the self-avoiding walk tree
construction yields a MRF on $\SAW(i)$ that defines a probability
distribution (non-negative and normalizable),
whose marginals will be denoted as $\omega$ as well.
We have to prove that, in this case, the generalized
root marginal is proportional to the ordinary marginal
at the root of $\SAW(i)$. The crucial remark is that,
because of permissivity, the messages are non-negative and,
in particular, $\omega_{u\to v}(x_u^*)>0$ and $\tomega_{u\to v}(x_v^*)>0$.

Assume, without loss of generality, that $x_u^*=\0t$. We define
the likelihood ratios on the $\SAW(i)$ tree $R_{u\to v} =
\omega_{u\to v}(\1t)/\omega_{u\to v}(\0t)$, $\Rh_{u\to v} =
\hat{\omega}_{u\to v}(\1t)/\hat{\omega}_{u\to v}(\0t)$ and $R_i =
\omega_i(\1t)/\omega_i(\1t)$. The ratio $\Rh_{\D_l(u)}$ is defined
analogously in terms of $\omega_{\D_l(u)}(\, \cdot\,)$. Equation
(\ref{eq:UpEdges}) then implies
\begin{eqnarray}
\Rh_{u\to v} = \frac{\psi_{uv}(\0t,\1t)+\psi_{uv}(\1t,\1t)\, R_{u\to v}}
{\psi_{uv}(\0t,\0t)+\psi_{uv}(\1t,\0t)\, R_{u\to v}}\, .\label{eq:RatioEdges}
\end{eqnarray}
Eq.~(\ref{eq:UpNodes1}) yields on the other hand
\begin{eqnarray}
R_{u\to w} = \frac{\psi_u(\1t)}{\psi_u(\0t)}\prod_{l=1}^k
\Rh_{\D_l(u)}\, .
\end{eqnarray}
Finally, using the remark that $\tomega^{(l)}(\1t) = \tomega^{(l+1)}(\0t)$
for $l=1,\dots,k-1$, we get from Eq.~(\ref{eq:Concatenation})
\begin{eqnarray}
\Rh_{D_l(u)} = \Rh^{(1)}\cdots \Rh^{(k)} =
\prod_{v\in\D_l(u)}\Rh_{v\to u}\, .
\end{eqnarray}
Putting the last two equations together
\begin{eqnarray}
R_{u\to w} = \frac{\psi_u(\1t)}{\psi_u(\0t)}\prod_{v\in\D(u)}
\Rh_{v\to u}\, .\label{eq:RatioNodes}
\end{eqnarray}
It is now easy to check that,  Eq.~(\ref{eq:RatioNodes}) and
Eq.~(\ref{eq:RatioEdges}) coincide with the appropriate recursive
definition of probability marginal ratios on $\SAW(i)$.
\end{proof}

Proposition \ref{propo:gMRF} does not yield an efficient
way of computing marginals of gMRF. The conundrum is that
the resulting complexity is linear in the size of $\SAW(i)$ which is in turn
exponential in the size of the original graph $\cG$.
On the other hand, it provides a systematic way to define and study algorithms
for computing efficiently such a marginal.
The idea, proposed first in \cite{Weitz}, is to
deform $\SAW(i)$ in such a way that its generalized root marginal does
not change too much, but computing it is much easier.
%
%
\section{Truncating the Tree}
\label{sec:Truncation}

BP can be seen as an example of the approach mentioned at the end of
the previous section. In this case $\SAW(i)$ is
replaced by the first $t$ generations of the computation tree, to
be denoted by $\Tree(i;t)$. In this case the complexity of evaluating the
generalized root marginal scales as $t$ rather than as $|\Tree(i;t)|$.

A different idea is to cut some of the branches of
$\SAW(i)$ in such a way to reduce drastically its size.
We will call \emph{truncation} the procedure of cutting branches of
$\SAW(i)$. It is important to keep in mind that truncation is different from
the \emph{termination} of branches when a loop is closed in $\cG$.
While termination is completely defined, we are free to define truncation
to get as good an algorithm as we want.
In the following we shall define truncation schemes parametrized
by an integer $t$, and denoted as $\SCHEME(i;t)$. We will have
$\SCHEME(i;t) = \SAW(i)$ for $t\ge n$, thus recovering the exact marginal by
Proposition \ref{propo:gMRF}.

In order for the algorithm to be 
efficient, we need
to ensure the following constraints. $(i)$ $\SCHEME(i;t)$ is `small enough' (as the complexity of
computing its generalized root marginal is at most linear in its
size). $(ii)$ $\SCHEME(i;t)$ is `easy to construct.'
For coding applications, this second constraint is somewhat less
restrictive because the tree(s) $\SCHEME(i;t)$ can be constructed
in a preprocessing stage and not  recomputed at each use of the code.

In order to achieve the second goal, we must define the partition
$\D(u) = \D_{1,t}(u)\cup\cdots\cup \D_{k,t}(u)$ of children of $u$
according to the subtree $\SCHEME(i;t)$ used in the computation.
Consider two children of $u$,  which we denote by $v_1,v_2\in \D(u)$. 
In a similar way as for the $\SCHEME(i)$ in the complete tree case, we write them as
$v_1 = (u,j_1)$, $v_2 = (u,j_2)$, and define $v_1\sim_t v_2$ if
there exists a descendant $v'_1$ of $v_1$ in $\SCHEME(i;t)$ such
that $\pi(v'_1) = j_2$ or a descendant $v'_2$ of $v_2$ such that
$\pi(v'_2) = j_1$. The construction of the partition
$\{\D_{1,t}(u),\dots,\D_{k,t}(u)\}$ will be different whether
communication takes place over erasure or general channels.
%
%
\subsection{Weitz's Fixed Depth Scheme and its Problems}

The truncation procedure proposed in \cite{Weitz} amounts to
truncating all branches of $\SAW(i)$ at the same depth $t$,
unless they are already terminated at smaller depth.
Variables at depth $t$ (boundary) are forced to take arbitrary values.

The rationale for this scheme comes from the `strong spatial mixing'
property that holds for the system studied in \cite{Weitz}.
Namely, if we denote by $\omega_{i|t}(x_i|\ux_t)$ the normalized
marginal distribution at the root $i$ given variable assignment at depth
$t$, we have
\begin{eqnarray}
||\omega_{i|t}(\,\cdot\,|\ux_t)-\omega_{i|t}(\,\cdot\,
|\ux_t')||_{\sTV}\le A\, \lambda^t\, ,\label{eq:SSM}
\end{eqnarray}
uniformly in the boundary conditions $\ux_t$, $\ux'_t$
for some constants $A>0$, $\lambda\in [0,1)$.

It is easy to realize that the condition in Eq.~(\ref{eq:SSM})
generically does not hold for `good' sparse graph codes.
The reason is that fixing the codeword values at the boundary
of a large tree, normally determines their values inside the same tree.
In other words the TP estimates strongly depend on this boundary condition.

One can still hope that some simple boundary condition might
yield empirically good estimates. An appealing choice is to
leave `free' the nodes at which the tree is truncated.
This means that no node potential is added on these boundary vertices.
We performed numerical simulations with this scheme on the same examples
considered in the next section. The results are rather poor:
unless the truncation level $t$ is very large (which is feasible only
for small codes in practice) the bit error rate is typically worse than under BP decoding.
%
%
\subsection{Improved Truncation Schemes: Erasure Channel}
\label{sec:ErasureTruncation}

For decoding over the BEC, a simple trick improves
remarkably the performances of TP decoding. First, construct a
subtree of $\SAW(i)$ of depth at most $t$ by truncating at the
deepest variable nodes whose depth does not exceed $t$. The
partition $\{\D_{1,t}(u),\dots,\D_{k,t}(u)\}$ is constructed using
the equivalence class of the transitive closure of $\sim_t$. Then
run ordinary BP on this graph, upwards from the leaves towards the
root, and determine all messages in this direction. If a node $u$
is decoded in this way, fix it to the corresponding value and
further truncate the tree $\SAW(i)$ at this node.

The resulting tree $\SAW(i;t)$ is not larger than the one
resulting from fixed-depth truncation. For low erasure probabilities
it is in fact much smaller than the latter.
%
%
\subsection{Improved Truncation Schemes: General channel}
\label{sec:TruncationGeneral}

The above trick cannot be applied to general BM
channels. We therefore resort to the following two constructions.

$(i)$ {Construction $\MAPt(i;t)$}: Define the distance $d(i,j)$ between two variable
nodes $i$, $j$ to be the number of check nodes encountered along
the shortest path from $i$ to $j$. Let  $\ball(i;t)$ be the
subgraph induced\footnote{The subgraph induced by a subset $U$ of
variable nodes is the one including all those check nodes that
only involve variables in $U$.} by  variable nodes whose distance
from $i$ is at most $t$. Then we let $\SAW(i;t)\defas \MAPt(i;t)$
be the \emph{complete} self-avoiding walk tree for the subgraph
$\ball(i;t)$.
This corresponds to truncating $\SAW(i)$ as soon as the
corresponding self-avoiding walk exits $\ball(i;t)$. No forcing
self-potential is added on the boundary.

A nice property of this scheme is that it returns the \emph{a
posteriori} estimate of transmitted bit $X_i$ given the channel
outputs within $\ball(i;t)$, call it $\underline{Y}_{\ball(i;t)}$.
As a consequence, many reasonable performance measures (bit error
probability, conditional entropy, etc.) are monotone in $t$
\cite{TwoLectures}.

On the negative side, the size of the tree $\MAPt(i;t)$ grows very
rapidly (doubly exponentially) with $t$ at small $t$. This
prevented us from using $t\ge 3$. 


$(ii)$ {Construction $\MAPt(i;t)-\BPt(\ell)$}: The tree $\MAPt(i;t)$ constructed as in
the previous approach is augmented by adding some descendants to
those nodes that are terminated in $\MAPt(i;t)$. More precisely,
below any such node $u$ in the mentioned tree, we add the first
$\ell$ generations of the computation tree. 

$(iii)$ Construction $\SAW(i;t)$: 
We can implement a finer scheme for the general BM case. 
This scheme operates on $\SAW(i;t)$ obtained by
truncating all branches of $\SAW(i)$ at the same depth $t$. The description of 
this method is slightly lengthy. We
omit the details here and choose to present the numerical results in Fig.~\ref{fig:exitcurvesconvc50bawgn} of Section
\ref{sec:Simulations}.
%
%
\section{Numerical simulations}
\label{sec:Simulations}

For communication over the BEC, the implementation
of TP decoding as described in Section \ref{sec:ErasureTruncation} is satisfying.
While it is simple enough for practical purpose, it permits us to depict performance
curves that interpolate successfully between BP and MAP decoding.
 The binary erasure channel, which we denote by BEC$(\e)$ if the erasure probability is $\e$, is appealing for a first study for the following reasons. $(i)$
 The accessibility of performance curves under MAP
decoding allows for a careful study of the new algorithm. $(ii)$
The TP decoder turns  out to be `robust' with respect to
changes in the truncation method, hence simpler to study.

As an example for a generic BM channel, we shall
consider the binary-input additive white Gaussian noise channel
with standard deviation $\sigma$, which we denote by
BAWGN($\sigma^2$).

Let us stress that the TP decoder is not (in general) symmetric
with respect to codewords. This complicates a little the analysis
(and simulations) which has to be performed for uniformly random
transmitted codewords.
%
%
\subsection{Binary Erasure Channel}

The erasure case is illustrated
by three examples: a tail-biting convolutional code,
the $(23,12)$ Golay code and a sparse graph code.
Here the comparison is done with BP after convergence
(`infinite' number of iterations) and MAP as implemented through
Gaussian elimination.

TP decoder permits us to plot a sequence of performance curves,
indexed by the truncation parameter $t$. In all of the cases
considered, TP improves over BP already at small values of $t$. As
$t$ increases, TP eventually comes very close to MAP. The gain is
particularly clear in codes with many short loops, and at low
noise. This confirms the expectation that, when truncated, TP
effectively takes care of small `pseudocodewords.'

The first example is a memory two and rate $1/2$ convolutional
code in tailbiting form with blocklength $\n=100$. The
performance curves are shown in Fig.~\ref{fig:ccm2_100}.
The TP and BP decoders are based on a periodic Tanner graph
associated with the tailbiting code with generator pair
$(1+D^2,1+D+D^2)$. More precisely, they are based on the graph
representing the parity-check matrix with circulant
horizontal pattern  $\1t\1t\0t\1t\1t\1t$.

\begin{figure}[hbt]
\centering
\setlength{\unitlength}{0.75bp}%
\begin{picture}(260,200)
\put(0,0)
{
\put(10,10){\includegraphics[scale=0.75]{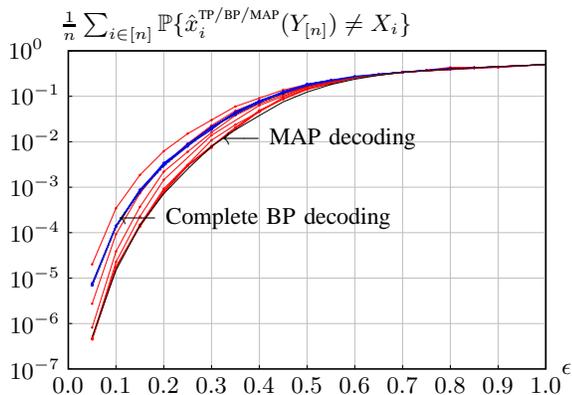}}

\small
\multiputlist(12,0)(24,0)[cb]{$0.0$,$0.1$,$0.2$,$0.3$,$0.4$,$0.5$,$0.6$,$0.7$,$0.8$,$0.9$,$1.0$}
\multiputlist(8,12)(0,23)[rc]{$10^{-7}$,$10^{-6}$,$10^{-5}$,$10^{-4}$,$10^{-3}$,$10^{-2}$,$10^{-1}$,$10^{0~}~$}
\put(260,12){\makebox(0,0)[l]{\e}} 
\put(8,185){\makebox(0,0)[l]{$\frac1\n \sum_{i\in[n]}\prob\{\hat{x}^{\TP/\BP/\MAP}_i(Y_{[n]})\neq X_i\}$}}

\put(89,125){\rotatebox{0.0}{\small $\longleftarrow$ MAP decoding}}
\put(37,85){\rotatebox{0.0}{\small $\longleftarrow$ Complete BP decoding}}
}
\end{picture}
\caption{\small Tailbiting convolutional code with generator pair
$(1+D^2,1+D+D^2)$ and blocklength $\n=100$.  Black curve:
BP decoding with $t=\infty$. Red curve: MAP decoding (BP and Gaussian elimination). Blue curves: BP decoding with $t=3,4,5,6,8,10,12,14$ (almost indistinguishable). Red curves: TP decoding  with $t=3,4,5,6,8,10,12,14$ (truncated tree). }
\label{fig:ccm2_100}
\end{figure} %

The second example is the standard (perfect) Golay code with blocklength $n=23$. It is shown in Fig.~\ref{fig:cgolay_23}.

\begin{figure}[hbt]
\centering
\setlength{\unitlength}{0.75bp}%
\begin{picture}(260,200)
\put(0,0)
{
\put(10,10){\includegraphics[scale=0.75]{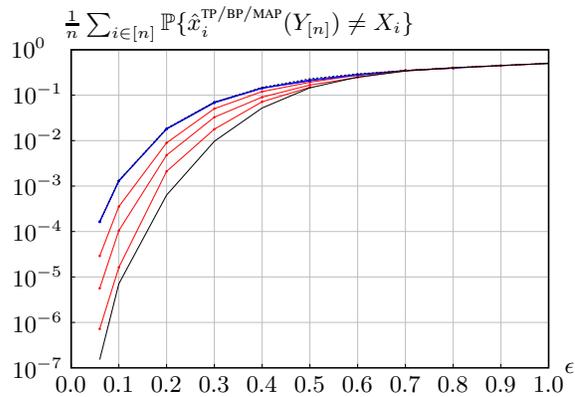}}

\small
\multiputlist(12,0)(24,0)[cb]{$0.0$,$0.1$,$0.2$,$0.3$,$0.4$,$0.5$,$0.6$,$0.7$,$0.8$,$0.9$,$1.0$}
\multiputlist(8,12)(0,23)[rc]{$10^{-7}$,$10^{-6}$,$10^{-5}$,$10^{-4}$,$10^{-3}$,$10^{-2}$,$10^{-1}$,$10^{0~}~$}
\put(260,12){\makebox(0,0)[l]{\e}} 
\put(8,185){\makebox(0,0)[l]{$\frac1\n \sum_{i\in[n]}\prob\{\hat{x}^{\TP/\BP/\MAP}_i(Y_{[n]})\neq X_i\}$}}

}
\end{picture}
\caption{\small $(23,12)$ Golay code with blocklength $\n=23$.
Blue curve: BP decoding with $t=\infty$.
Black curve: MAP decoding (BP and Gaussian
elimination). Blue curves: BP decoding with $t=4,5,6$.
Red curves: TP decoding  with $t=4,5,6$ (truncated tree). }
\label{fig:cgolay_23}
\end{figure} %

The third example, an LDPC code with blocklength $n=50$, is depicted in Fig.~\ref{fig:creg36_50}.

\begin{figure}[hbt]
\centering
\setlength{\unitlength}{0.75bp}%
\begin{picture}(260,200)
\put(0,0)
{
\put(10,10){\includegraphics[scale=0.75]{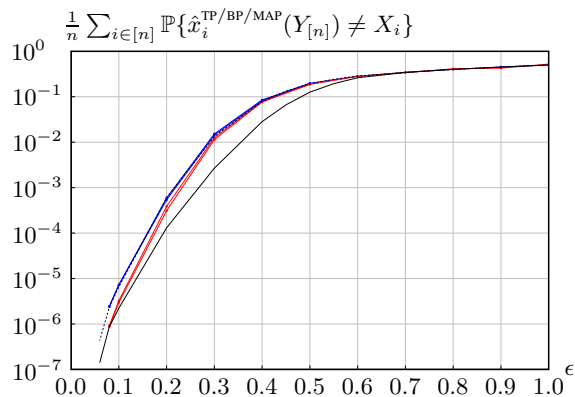}}

\small
\multiputlist(12,0)(24,0)[cb]{$0.0$,$0.1$,$0.2$,$0.3$,$0.4$,$0.5$,$0.6$,$0.7$,$0.8$,$0.9$,$1.0$}
\multiputlist(8,12)(0,23)[rc]{$10^{-7}$,$10^{-6}$,$10^{-5}$,$10^{-4}$,$10^{-3}$,$10^{-2}$,$10^{-1}$,$10^{0~}~$}
\put(260,12){\makebox(0,0)[l]{\e}} 
\put(8,185){\makebox(0,0)[l]{$\frac1\n \sum_{i\in[n]}\prob\{\hat{x}^{\TP/\BP/\MAP}_i(Y_{[n]})\neq X_i\}$}}
}
\end{picture}
\caption{\small A regular $(3,6)$ LDPC code with blocklength $\n=50$.  Blue curve: BP decoding with $t=\infty$. Black curve: MAP decoding (BP and Gaussian elimination). Blue curves: BP decoding with $t=7,8$. Red curves: TP decoding  with $t=7,8$ (truncated tree). }
\label{fig:creg36_50}
\end{figure}
%
%
\subsection{Binary-Input Additive White Gaussian Noise Channel}

In the case of the BAWGN channel, we consider a single example of code, 
the tail-biting convolutional code used above, and two truncation schemes, 
 the constructions $(ii)$ and $(iii)$ described in Section \ref{sec:TruncationGeneral}.

Our results are shown in Fig.~\ref{fig:ccm2_50_bawgn} and
Fig.~\ref{fig:exitcurvesconvc50bawgn}. The TP and BP decoders are
based on the natural periodic Tanner graph associated with the
tailbiting code. We run BP a large number of iterations and check
the error probability to be roughly independent of the iterations
number. The MAP decoder is performed using BP on the single-cycle
tailbiting trellis (i.e.,  BCJR on a ring
\cite{McEliece,HartmannRudolph,AndersonHladik}).

 We observe that the two schemes $\MAPt(i;t)-\BPt(\ell)$ and $\SAW(i;t)$ with $t=8$ outperform
BP. Unhappily, due to complexity constraints we were limited to
small values of $t$ and therefore could not approach the actual
MAP performances.

\begin{figure}[hbt]
\centering
\setlength{\unitlength}{0.75bp}%
\begin{picture}(260,200)
\put(0,0)
{
\put(10,10){\includegraphics[scale=0.75]{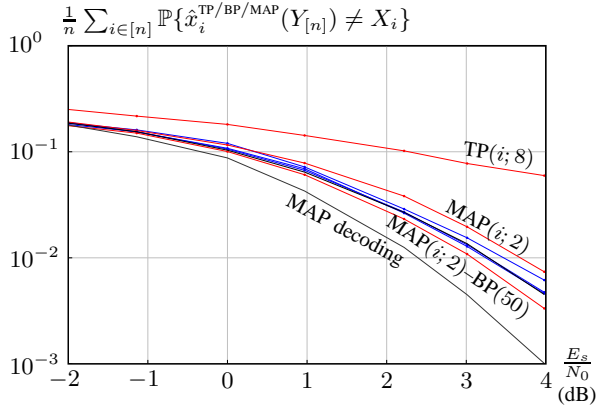}}

\small
\multiputlist(10,0)(40,0)[cb]{$-2$,$-1$,$~0$,$~1$,$~2$,$~3$,$~4$}
\multiputlist(8,12)(0,54)[rc]{$10^{-3}$,$10^{-2}$,$10^{-1}$,$10^{0~}~$}
\put(260,12){\makebox(0,0)[l]{$\frac{E_s}{N_0}$}}
\put(258,-4){\makebox(0,0)[l]{\footnotesize(dB)}}  
\put(8,185){\makebox(0,0)[l]{$\frac1\n \sum_{i\in[n]}\prob\{\hat{x}^{\TP/\BP/\MAP}_i(Y_{[n]})\neq X_i\}$}}
\put(120,92){\rotatebox{-28}{\footnotesize  MAP decoding}}
\put(210,117){\rotatebox{-8}{\footnotesize TP$(i;8)$}}
\put(200,89){\rotatebox{-28}{\footnotesize MAP$(i;2)$}}
\put(170,82){\rotatebox{-34}{\footnotesize MAP$(i;2)$--BP${(50)}$}}
}
\end{picture}
\caption{\small Tailbiting convolutional code with generator pair $(1+D^2,1+D+D^2)$ and blocklength $\n=50$. Dashed black curve: BP decoding with $t=400$. Black curve: MAP decoding (wrap-around BCJR). Blue curves: BP decoding with $t=8,50$. Red curves: TP decoding  with $t=8$ (truncated tree, denoted by TP($i;8)$), TP decoding on a ball of radius 2 (scheme $(ii)$ with no BP processing, denoted by  $\MAPt(2)$) and, TP decoding according to scheme $(ii)$ (with parameters as indicated by  $\MAPt(i;2)-\BPt(50)$).
 }
\label{fig:ccm2_50_bawgn}
\end{figure} %

\begin{figure}[hbt]
\centering
\setlength{\unitlength}{0.75bp}
\begin{picture}(260,200)
\put(0,0)
{
\put(10,10){\includegraphics[scale=0.75]{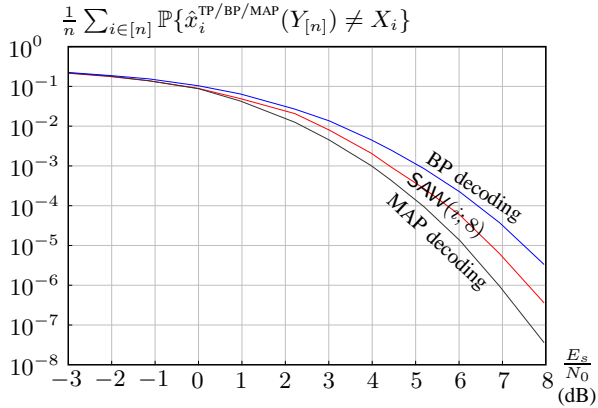}}
\small
\multiputlist(11,0)(22,0)[cb]{$-3$,$-2$,$-1$,$0$,$1$,$2$,$3$,$4$,$5$,$6$,$7$,$8$}
\multiputlist(8,12)(0,20)[rc]{$10^{-8}$,$10^{-7}$,$10^{-6}$,$10^{-5}$,$10^{-4}$,$10^{-3}$,$10^{-2}$,$10^{-1}$,$10^{0~}~$}
\put(260,12){\makebox(0,0)[l]{$\frac{E_s}{N_0}$}}
\put(258,-4){\makebox(0,0)[l]{\footnotesize(dB)}}
\put(8,185){\makebox(0,0)[l]{$\frac1\n \sum_{i\in[n]}\prob\{\hat{x}^{{\TP}/{\BP}/\MAP}_i(Y_{[n]})\neq X_i\}$}}
\put(190,114){\rotatebox{-33}{\footnotesize BP decoding}}
\put(180,104){\rotatebox{-38}{\footnotesize $\SCHEME(i;8)$}}
\put(170,94){\rotatebox{-43}{\footnotesize MAP decoding}}
}
\end{picture}
\caption{\small Tailbiting convolutional code with generator pair
$(1+D^2,1+D+D^2)$ and blocklength $\n=50$. Blue curve: BP
decoding with $t=400$. Black curve: MAP decoding  (wrap-around BCJR). Red
curve: TP decoding according to scheme $(iii)$ (with parameter as indicated by  $\SCHEME(i;t=8)$ using a suitable truncated tree).
 }
\label{fig:exitcurvesconvc50bawgn}
\end{figure}

%
%
\section{Theoretical Implications}
\label{sec:Theoretical}


One interesting direction is to use 
 the self-avoiding walk
tree construction for analysis purposes. We think in particular
of two types of developments: $(i)$ a better understanding of the relation between
BP and MAP decoding, and $(ii)$ a study of the `inherent hardness'
of decoding sparse graph codes.

While the first point is self-explanatory, it might be useful to
spend a few words on the second. The most important outcome
of the theory of iterative coding systems can be phrased as follows.
\begin{quote}
There exist families of graphs (expanders \cite{SipserSpielman}, random
\cite{Capacity})
with diverging size and bounded degree,
below a certain noise level, MAP decoding can be achieved in linear time
up to a `small error.'
\end{quote}
We think (a formal version of) the same statement to be true for
\emph{any family of graphs} with bounded degree. 
This can be proved for
the erasure channel.
\begin{propo}
Let $\{G_n\}$ be a family of Tanner graphs of diverging
blocklength $n$, with maximum variable degree ${\tt l}$ and
check degree ${\tt r}$. Consider communication over
{\rm BEC}$(\epsilon)$ with $\epsilon<1/(\tl-1)(\tr-1)$.
Then, for any $\delta>0$ there exists a decoder
whose complexity is of order $n{\rm Poly}(1/\delta)$ and
returning estimates $\{\xh_1(\uy),\xh_2(\uy),\dots,\xh_n(\uy)\}$
such that  $\prob\{\xh_i(\uy)\neq\xh^{\MAP}_i(\uy)\}\le \delta$.
\end{propo}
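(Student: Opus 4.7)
The plan is to instantiate the TP decoder with the erasure-channel truncation of Section~\ref{sec:ErasureTruncation} at depth $t=O(\log(1/\delta))$. The argument will have three steps: $(i)$ on a BEC, any definite (non-erasure) output of the TP procedure must coincide with $\hat{x}_i^{\MAP}$, so a mismatch can only be produced when TP returns $*$; $(ii)$ the TP erasure probability at depth $t$ is no larger than that of standard BP run for $t$ iterations on the original graph; and $(iii)$ under $\epsilon(\tl-1)(\tr-1)<1$ this latter quantity decays exponentially in~$t$. Combining $(i)$--$(iii)$ will give $\prob\{\hat{x}_i\ne\hat{x}_i^{\MAP}\}\le\delta$ for $t=O(\log(1/\delta))$, and since the truncated SAW tree has $\mathrm{Poly}(1/\delta)$ nodes the overall cost will be $n\cdot\mathrm{Poly}(1/\delta)$.

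For $(i)$, on a BEC the true posterior at bit $i$ is either a Dirac at the transmitted value (``decoded'') or the uniform distribution on $\{\0t,\1t\}$ (``erased''); and truncating $\SAW(i)$ amounts to discarding some of the parity constraints implicit in the full graph, so the marginal on the truncated tree is a coarsening of the MAP marginal (removing constraints can only add codewords to the support, never remove them). Hence whenever the TP output is a definite bit, MAP outputs the same bit, and
\[
\prob\{\hat{x}_i(\uy)\ne\hat{x}_i^{\MAP}(\uy)\}\;\le\;\prob\{\hat{x}_i(\uy)=*\}.
\]
For $(ii)$ I would note that the scheme of Section~\ref{sec:ErasureTruncation} contains the depth-$t$ computation tree $\Tree(i;t)$ as a subgraph, with extra SAW-terminated leaves that only carry forcing weights consistent with the code and hence can only enlarge the set of decoded variables; so the TP root is decoded whenever $t$-iteration ordinary BP decodes~$i$.

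For $(iii)$, on $\Tree(i;t)$ every variable (resp.~check) has degree at most $\tl$ (resp.~$\tr$); a direct check on the one-step density-evolution map shows that the worst case for the edge-erasure probability is attained at variable degree $2$ and check degree $\tr$, giving the recursion $q'\le \epsilon\bigl(1-(1-q)^{\tr-1}\bigr)\le \epsilon(\tr-1)\,q$ as one iterates from the leaves up toward the root. Since $\epsilon(\tl-1)(\tr-1)<1$ combined with $\tl\ge 2$ gives $K\defas \epsilon(\tr-1)<1/(\tl-1)\le 1$, the root erasure probability decays like $\epsilon K^{t/2}$, and choosing $t=O(\log(1/\delta))$ makes this at most $\delta$. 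The truncated $\SAW(i;t)$ has size at most $((\tl-1)(\tr-1))^t=\mathrm{Poly}(1/\delta)$ and its generalized root marginal is computed in linear time, yielding total complexity $n\cdot\mathrm{Poly}(1/\delta)$ over all bits.

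The main obstacle I expect is step~$(iii)$: edge-erasure probabilities on an inhomogeneous tree are not jointly monotone in variable and check degrees, so the worst-case recursion above must be justified by a direct recursive argument rather than by invoking a standard density-evolution bound for regular ensembles. One must also handle variable nodes of degree~$1$ (which only relay channel information) by a trivial preprocessing, and verify that the loop-induced forced leaves of the SAW construction never inflate the root erasure probability relative to the bare computation tree.
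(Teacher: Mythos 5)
There is a genuine gap, and it sits at the junction of your steps $(i)$ and $(iii)$. After arguing that a definite TP output must agree with MAP, you bound the mismatch probability by $\prob\{\xh_i(\uy)=*\}$, the probability that the truncated-tree decoder is still erased, and then try to show this erasure probability decays like $\epsilon K^{t/2}$. But the root erasure probability of any local decoder does \emph{not} tend to zero as $t$ grows: it is lower bounded by the erasure probability of the global MAP decoder itself, which for a fixed graph family is generically a positive constant independent of $t$ (for instance, if bit $i$ lies in the support of a codeword of weight $w$, then with probability at least $\epsilon^w$ even full MAP cannot determine $x_i$, hence neither can TP). So $\prob\{\xh_i=*\}$ cannot be driven below an arbitrary $\delta$, and the inclusion $\{\xh_i\neq\xh_i^{\MAP}\}\subseteq\{\xh_i=*\}$ discards exactly the saving event $\{\xh_i=*\ \text{and}\ \xh_i^{\MAP}=*\}$ that makes the proposition true. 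The condition $\epsilon<1/(\tl-1)(\tr-1)$ is not a condition under which BP/TP erasure probabilities decay on arbitrary bounded-degree graphs (your ``worst case at variable degree $2$'' claim already fails once degree-$1$ variables, which the hypothesis allows, reset message erasure probabilities to $\epsilon$ at every level); it is a subcriticality condition for percolation of the erased subgraph.

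The event you actually need to control is ``the local decoder is undetermined while global MAP is determined,'' i.e.\ the information that resolves $x_i$ comes from outside the radius-$t$ neighborhood. This is what the paper does: its decoder is MAP restricted to $\ball(i;t)$, and the mismatch event is contained in the event that the connected component of the erased subgraph containing $i$ is not contained in $\ball(i;t)$. A union bound over the at most $\tl(\tl-1)^t(\tr-1)^t$ paths from $i$ to distance $t+1$, each fully erased with probability $\epsilon^{t+1}$, gives $\prob\{\xh_i\neq\xh_i^{\MAP}\}\le A\lambda^t$ with $A=\tl\epsilon$ and $\lambda=\epsilon(\tl-1)(\tr-1)<1$, and then $t=\log(A/\delta)/\log(1/\lambda)$ yields the $n\,{\rm Poly}(1/\delta)$ complexity. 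Your step $(i)$ is close in spirit to why a ball-MAP decoder never contradicts global MAP, but to repair the proof you must replace the erasure-probability estimate of steps $(ii)$--$(iii)$ by this percolation bound (or an equivalent comparison of the local and global posteriors); as a secondary point, the claim that the Weitz-style forced leaves of the truncated SAW tree ``only carry forcing weights consistent with the code'' is also unsubstantiated, which is another reason to argue directly about MAP on $\ball(i;t)$ rather than about the truncated TP decoder.
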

\begin{proof}
The decoder consists in returning the MAP estimate
of $i$ given the subgraph $\ball(i;t)$ and the values received therein.
Consider the subgraph $G_n(\uy)$ of
$G_n$ obtained by removing non-erased bits. The proof
consists in an elementary percolation estimate on this
graph, see \cite{Grimmett}.

It is easy to see that $\prob\{\xh_i(\uy)\neq\xh^{\MAP}_i(\uy)\}$
is upper bounded by the probability that the connected component
of $G_n(\uy)$ that contains $i$ is not-contained in $\ball(i;t)$.
This is in turn upper bounded
by the number of paths between $i$ and a vertex at distance $t+1$
(which is at most $\tl(\tl-1)^t(\tr-1)^t$) times the probability that
one such path is completely erased (which is $\epsilon^{t+1}$).
Therefore, for $A=\tl\epsilon>0$ and $\lambda = (\tl-1)(\tr-1)\epsilon<1$, 
we get 
%
%
$\prob\{\xh_i(\uy)\neq\xh^{\MAP}_i(\uy)\}\le A\lambda^t\, .$ 
%
%
The proof is completed by taking $t = \log(A/\delta)/\log(1/\lambda)$,
and noticing that $\ball(i;t)$ can be decoded in time
polynomial in its size, that is polynomial in $1/\delta$. The
computation is repeated for each $i\in\{1,\dots,n\}$ whence the
factor $n$.
\end{proof}
We think that a strengthening (better dependence on the precision $\delta$)
and generalization (to other channel models) of this result
can be obtained using the self-avoiding walk tree construction.

\section{Acknoledgments} Yi Lu is supported by the Cisco Stanford Graduate Fellowship.
Cyril M{\'e}asson is supported by the Swiss National Fund Postdoctoral Fellowship.
%
%

\addtolength{\textheight}{-3cm}

\end{document}